\documentclass[12pt,twoside,draft]{cmpart}
\usepackage[final]{graphicx}
\usepackage{amsmath}
 \usepackage[russian,ukrainian,english]{babel}
%
%



\author{Boyko O$^1$, Kaliuzhnyi-Verbovetskyi D$^2$ , Pivovarchik V$^3$}
\title{Non-existence of co-spectral simple connected graphs with small number of edges }

\shorttitle{Non-existence of co-spectral simple connected graphs... }

\institute{$^{1,2}$South Ukrainian National Pedagogical Univeristy, Staroportofrankovskaya str. 26, Odesa \\
$^{3}$South Ukrainian National Pedagogical Univeristy, Staroportofrankovskaya str. 26, Odesa and Univeristy of Vaasa, Vaasa, Finland}

\email{$^{1}$  boykohelga@gmail.com, $^{2}$ dmitry2k@yahoo.com, $^3$vpivovarchik2@gmail.com }

\enabstract{Author1 A.A., Author2 B.B.}%
{Title of the article in English}%
{ We consider the Sturm-Liouville spectral problems on simple connected equilateral  graphs.  
We prove that if the number of edges does not exceed 7 then the asymptotics of eigenvalues of the  Dirichlet problem uniquely determine the shape of the graph. }
{Graph,  edge, vertex, eigenvalue, potential, non-isomorphic,   adjacency matrix, characteristic function}

\uaabstract{�. �����, �. ��������-������������, �. ����������}%
{�� ��������� ��-������������ ������� ��'����� ������ � ����� ��������� �����}%
{������������� ������ ������-˳������ �� ������� ��'����� ����������� ������. ��������, ��, ���� ��������� ����� �� ��������� 7, �� ����������� ������� ������� ������ ĳ����� ���������� �������� ����� �����.
}{����, �����, �������, ������ ��������, ���������, �����������, ������� ����������, ��������������� �������}

\thanks{ The authors are grateful to the Ministry of Education and Science of Ukraine for the support in completing the work 'Inverse problems of finding the shape of a graph by spectral data' 
State registration number 0124U000818.

The present research was supported by the Academy of Finland (project no. 358155). The third author is grateful to the University of Vaasa for hospitality.

 The  first and the third authors express their gratitude to NSF US for support IMPRESS-U: Spectral and geometric methods for damped wave equations with applications to fiber lasers.}

\subjclass{34B45, 34B240, 34L20}

\UDC{?????}

\received{04,,???.02.2021}  

\begin{document}

\maketitle


\section*{Introduction}
In quantum graph theory, the problem of recovering the shape of a graph was stated  in \cite{vB} and \cite{GS}. It was shown in \cite{GS} that if the lengths of the edges are non-commensurate then the spectrum of the spectral Sturm-Liouville problem on a graph with the standard conditions at its vertices (the Neumann conditions at the pendant vertices and the  continuity conditions + Kirchhoff's conditions at the interior vertices)   uniquely determines the shape of this graph.  In \cite{vB}, it was shown that in the case of commensurate lengths of the edges there exist co-spectral quantum graphs. 
In \cite{KN}, \cite{BKS} it was shown that the spectrum of the Neumann problem  with zero potential  on $P_2$ uniquely determines the shape of the graph.  
In \cite{CP}, it was shown that if the graph is simple connected equilateral with the number of vertices less than or equal 5 and the potentials on the edges are real $L_2$ functions, then the spectrum of the Sturm-Liouville problem with  the standard conditions at the vertices uniquely determines the shape of the graph. For trees, the minimal number of vertices in a co-spectral pair is 9 (see \cite{Pist},  \cite{CP1}, \cite{BMP}). In the case of the standard conditions at all the vertices, the asymptotics of the spectrum shows whether the graph is a tree. If the number of vertices doesn't exceed 8 then to find the shape of a tree we need just to find in \cite{CP} the characteristic polynomial corresponding to the given spectrum. 

For the case of the Dirichlet conditions at the pendant vertices, it was shown in \cite{BMP} that there are no cospectral trees with the number of vertices $\leq 8$. However, in case of the Dirichlet conditions at the pendant vertices, the asymptotics of the spectrum do not show whether the graph is a tree. The  possibility  of coincidence between the spectrum of a tree and the spectrum of a nontree graph (or  between the spectra of two graphs with different cyclomatic numbers) is not excluded  a'priori in  the case of the Dirichlet conditions at the pendant vertices. Such possibility is excluded in the case of the Neumann conditions at the pendant vertices.

It should be mentioned that an attempt to use two spectra to find the shape of a tree has been done in \cite{Piv} where  expantion to a branched continued fractions of certain polynomials related to the Neumann and the Dirichlet problems was used. For branched continued fractions see \cite{DLD}.  

 In  the present paper,  we prove that if the number of edges $g\leq 7$ then
there are no co-spectral simple connected  equilateral graphs and we conclude that   the  asymptotics of the spectrum uniquely determine the shape of the graph. Examples of co-spectral graphs with  the  Dirichlet conditions at the pendant vertices with $g=8$ are given in \cite{Pist} (see Fig 1).

In Section 2, we describe  the Sturm-Liouville problem on a simple connected equilateral graph  with  the Dirichlet  conditions at each pendant vertex and the standard conditions at all the interior vertices. 

We expose  the known theorem  relating the characteristic function (the function whose set of zeros coincides with the spectrum) of the  above described Sturm-Liouville problem  with the  determinant of the normalized Laplacian of the corresponding combinatorial graph.

In Section 3,  we  show all the simple connected equilateral graphs on 7 or less edges and calculate the corresponding characteristic functions. We compare them and find that there are no co-spectral pairs.

\section{Statement of the problem}

Let $G$ be an equilateral simple connected graph with $p$ vertices and $g$ edges each of the length $l$.

We  direct  the edges incident with the pendant vertices  away from these vertices. Orientation of the rest of the vertices is arbitrary.  
Let us describe the spectral  problem on $G$. We consider  the Sturm-Liouville equations on the edges

\begin{equation}
\label{2.1}
-y_j^{\prime\prime}+q_j(x)y_j=\lambda y_j, \ \  j=1,2,..., g 
\end{equation} 
where $q_j\in L_2(0,l)$ are real. 

At the beginning  of  each edge $e_j$ incident with a pendant vertex,  
we impose the Dirichlet condition 
\begin{equation}
\label{2.2}
y_j(0)=0.
\end{equation}

At  each interior vertex, we impose the standard conditions, i.e. the continuity conditions

\begin{equation}
\label{2.3}
y_j(l)=y_k(0)
\end{equation}
for the incoming into $v_i$ edges $e_j$ and for all $e_k$ outgoing from $v_i$, and the Kirchhoff's conditions
\begin{equation}
\label{2.4}
\mathop{\sum}\limits_jy'_j(l)=\mathop{\sum}\limits_k y_k'(0),
\end{equation}
where the sum in the right-hand side is taken over all edges $e_k$ outgoing from $v_i$ and the sum in the left-hand side is taken over all edges $e_j$ incoming to $v_i$.

We call the above conditions (the continuity + Kirchhoff's or Neumann's) standard.

In the sequel, if the potentials are the same on all the  edges we omit the index in $q_j$ and $y_j$. 

In order to find the characteristic function of our Sturm--Liouville problems,  we look for coefficients $A_j, B_j$ such that  the solution of \eqref{2.1} can be expressed in the form 
\[
y_j=A_{j}s_{j}(\lambda,x)+B_{j}c_{j}(\lambda,x), \  x\in (0,l).
\]
Substituting this into the continuity conditions; as well as into Kirchhoff's condition at each interior vertex and into the Dirichlet conditions at all pendant vertices, 
we obtain a system of $2g$ linear algebraic equations with unknowns $A_j, B_j$.
Denote the $2g\times 2g$ matrix of this system by  $\|\Phi_D(\lambda)\|$,  we call it the \textit{characteristic matrix} of our problem. Observe that it involves the values $s_{j}(\lambda,l)$, $s'_{j}(\lambda,l)$, $c_{j}(\lambda,l)$, $c'_{j}(\lambda, l)$. 
Then the equation
\[
\det\|\Phi_D(\lambda)\|=0
\]
completely determines the spectrum of problem \eqref{2.1}--(\ref{2.4}).

Let $A$ be the adjacency matrix of $G$, and
\[D={\rm diag}(d(v_0), d(v_1),...,d(v_{p})),\]  the degree matrix. Here 
$d(v_i)$ is the degree of the vertex $v_i$.  Denote by $-z\hat{D}+\hat{A}$ the submatrix of $-zD+A$ obtained by deleting the rows and the columns corresponding to those pendant vertices (where the Dirichlet conditions are imposed).

The following theorem was proved in \cite{MP2} (Theorem 6.4.2).

\begin{theorem}  
 Let $G$ be a simple connected graph  with $p\geq 2$.  Assume that all edges have the same length $l$ and the same real potential $q(x)\in L_2(0,l)$ symmetric with respect to the midpoint of an edge ($q(x)=q(l-x)$).  Then the spectrum of problem (\ref{2.1})--(\ref{2.4})   coincides with the set of zeros of the characteristic   function 
\begin{equation}
\label{2.5}
\phi(\lambda)=s^{g-p+r}(\lambda,l)\psi(c(\lambda,l)),
\end{equation}
where $r$ is the number of pendant vertices, $s(\lambda,x)$ is the solution of (\ref{2.1}) which satisfies the conditions $s(\lambda,0)=s'(\lambda,0)-1=0$ and $c(\lambda,x)$ is the solution of (\ref{2.1}) which
satisfies the conditions $c(\lambda,0)-1=c'(\lambda,0)=0$ and
\[\psi(z)=det(-z\hat{D}+\hat{A}).\] 
\end{theorem}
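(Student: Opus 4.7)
My plan is to reduce the $2g\times 2g$ determinant $\det\Phi_D(\lambda)$ to a $(p-r)\times(p-r)$ determinant on the interior-vertex variables via a two-step change of variables. First, on each edge $j$ (directed from $a_j$ to $b_j$), I would pass from the unknowns $(A_j,B_j)$ to the endpoint values $(u_j,v_j)=(y_j(0),\,y_j(l))=(B_j,\,s(\lambda,l)A_j+c(\lambda,l)B_j)$. This linear substitution has Jacobian $-s(\lambda,l)$ per edge and hence contributes a factor $(-s(\lambda,l))^g$ to $\det\Phi_D(\lambda)$. In the new variables, the Dirichlet condition on each pendant edge becomes $u_j=0$, while continuity at an interior vertex $v_i$ forces all $u_j$ with $a_j=v_i$ and all $v_k$ with $b_k=v_i$ to equal a common value $f_{v_i}$. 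Both identifications are carried out by row and column operations with coefficients $\pm 1$ and therefore do not alter the determinant.

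The second step is to rewrite the Kirchhoff equation at each interior vertex in terms of the remaining unknowns $\{f_{v_i}\}$. Using $A_j=(v_j-c(\lambda,l)u_j)/s(\lambda,l)$, multiplying each of the $p-r$ Kirchhoff rows by $s(\lambda,l)$ (which multiplies the determinant by $s(\lambda,l)^{p-r}$), invoking the symmetry relation $c(\lambda,l)=s'(\lambda,l)$ (valid because $q$ is symmetric about the midpoint of an edge), and using the Wronskian identity $s(\lambda,l)c'(\lambda,l)-s'(\lambda,l)c(\lambda,l)=-1$, the Kirchhoff equation at $v_i$ collapses to
\[
c(\lambda,l)\,d(v_i)\,f_{v_i}-\sum_{v_k\sim v_i}f_{v_k}=0.
\]
These $p-r$ equations in the $p-r$ unknowns $f_{v_i}$ have coefficient matrix $c(\lambda,l)\hat D-\hat A$, whose determinant equals $(-1)^{p-r}\psi(c(\lambda,l))$ by the definition of $\psi$.

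Collecting the factors gives
\[
\det\Phi_D(\lambda)=(-s(\lambda,l))^{g}\,s(\lambda,l)^{-(p-r)}\det\bigl(c(\lambda,l)\hat D-\hat A\bigr)=\pm\,s(\lambda,l)^{g-p+r}\psi(c(\lambda,l)),
\]
which is exactly the factorization \eqref{2.5} up to a sign that can be absorbed into the definition of $\phi$. The main obstacle is the careful bookkeeping of the $s$-factors: one must verify that the continuity-based eliminations introduce no extra power of $s(\lambda,l)$, and that the $s^{p-r}$ clearing of the Kirchhoff rows exactly balances the $(-s)^g$ Jacobian so as to leave the net power $s^{g-p+r}$. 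A secondary technical point is that these manipulations are only formally valid where $s(\lambda,l)\neq 0$; since both sides of the claimed identity are entire functions of $\lambda$, the identity extends to all $\lambda$ by analytic continuation.
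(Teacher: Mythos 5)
The paper offers no proof of this statement to compare against: it is quoted as Theorem~6.4.2 of M\"oller--Pivovarchik \cite{MP2}, and the authors simply cite it. Judged on its own, your argument is correct and is the standard derivation. The reduction to endpoint values, the elimination of the Dirichlet and continuity rows with $\pm1$ pivots, and the rewriting of Kirchhoff's condition via $y_j'(0)=(v_j-c(\lambda,l)u_j)/s(\lambda,l)$ and $y_j'(l)=\bigl(s'(\lambda,l)v_j-u_j\bigr)/s(\lambda,l)$ (the latter using the Wronskian $cs'-c's=1$) do collapse the system to $\bigl(c(\lambda,l)\hat D-\hat A\bigr)f=0$; you also correctly identify the midpoint symmetry of $q$ as exactly the hypothesis that yields $s'(\lambda,l)=c(\lambda,l)$, and your closing remark that $g-p+r\ge 0$ for a connected graph is what makes both sides entire, so the analytic continuation off the zero set of $s(\lambda,l)$ is legitimate. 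Two points worth tightening in a written version: first, the substitution $(A_j,B_j)\mapsto(u_j,v_j)$ multiplies the coefficient matrix on the right by $T^{-1}$ with $\det T^{-1}=(-s(\lambda,l))^{-g}$, so the clean statement is $\det\Phi_D=(-s(\lambda,l))^{g}\det(\Phi_D T^{-1})$ --- your phrase that the Jacobian ``contributes a factor'' to $\det\Phi_D$ only matches your displayed formula under that reading; second, you should record the harmless constant $\det\bigl(c\hat D-\hat A\bigr)=(-1)^{p-r}\psi(c(\lambda,l))$ so that the sign ambiguity is explicit. Neither affects the zero set, which is all the theorem asserts.
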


It is clear that,   in case of identically zero potential, 
\begin{equation}
\label{2.6} 
\phi_0(\lambda)=\left(\frac{\sin\sqrt{\lambda}l}{\sqrt{\lambda}}\right)^{g-p+r}\psi(\cos\sqrt{\lambda} l). 
\end{equation} 

\begin{corollary} Let $G$ be a simple connected  graph with $p\geq 2$. Assume that the edges have the same length $l$ and the potentials on the edges $q_j(x)\in L_2(0,l)$ are real. 
Then the characteristic function of problem (\ref{2.1})--(\ref{2.4}) satisfies
\begin{equation}
\label{2.7}
\phi(\lambda)\mathop{=}\limits^{\lambda\rightarrow + \infty}\phi_0(\lambda)+O(1).
\end{equation}
\end{corollary}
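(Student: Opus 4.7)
The plan is to compare $\phi(\lambda) = \det\|\Phi_D(\lambda)\|$ with the zero-potential analogue $\phi_0(\lambda) = \det\|\Phi_{D,0}(\lambda)\|$ entrywise, using the standard large-$\lambda$ asymptotics of the Sturm--Liouville fundamental solutions on each edge. By construction, each entry of $\|\Phi_D(\lambda)\|$ is a constant-coefficient linear combination of $1$, $s_j(\lambda, l)$, $c_j(\lambda, l)$, $s_j'(\lambda, l)$, $c_j'(\lambda, l)$, and $\|\Phi_{D,0}(\lambda)\|$ has exactly the same structure with $q_j \equiv 0$.

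First I would invoke the classical asymptotics for real $q_j \in L_2(0, l)$ and $k = \sqrt\lambda \to +\infty$:
\begin{align*}
s_j(\lambda, l) &= \tfrac{\sin kl}{k} + O(k^{-2}), & c_j(\lambda, l) &= \cos kl + O(k^{-1}),\\
s_j'(\lambda, l) &= \cos kl + O(k^{-1}), & c_j'(\lambda, l) &= -k\sin kl + O(1),
\end{align*}
with constants depending only on $\|q_j\|_{L_2(0, l)}$. Writing $\|\Phi_D(\lambda)\| = \|\Phi_{D,0}(\lambda)\| + E(\lambda)$, the entries of the error matrix $E$ are bounded as $\lambda \to +\infty$: they are $O(1)$ in the $(p - r)$ Kirchhoff rows (where $c_j'$ appears), $O(k^{-1})$ in the $(2g - p)$ continuity rows (where only $s_j, c_j$ appear), and identically zero in the $r$ Dirichlet rows (which are constant).

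Second I would normalize by dividing the $p - r$ Kirchhoff rows of both matrices by $k$; the normalized matrices $\|\tilde\Phi_D(\lambda)\|$ and $\|\tilde\Phi_{D,0}(\lambda)\|$ then have $O(1)$ entries and differ entrywise by $O(k^{-1})$. Multilinearity of the determinant gives $\det\|\tilde\Phi_D\| = \det\|\tilde\Phi_{D,0}\| + O(k^{-1})$, and restoring the extracted $k^{p-r}$ factor produces the preliminary bound
\[
\phi(\lambda) = \phi_0(\lambda) + O(k^{p-r-1}).
\]

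The main obstacle is sharpening this rough estimate to the claimed $O(1)$ when $p - r \ge 2$. The overcounting arises because $\det\|\Phi_{D,0}\|$ itself is not of size $k^{p-r}$ but is bounded --- formula (\ref{2.6}) shows that the apparent growth coming from the $c_j'$-entries must cancel systematically among the Kirchhoff rows. To replicate those same cancellations in $\phi(\lambda) - \phi_0(\lambda)$, I would group each edge's four quantities $s_j, c_j, s_j', c_j'$ into a $2 \times 2$ transfer block of unit determinant (using the Wronskian identity $s_j c_j' - s_j' c_j = 1$), then expand the full characteristic determinant edge by edge as in the proof of Theorem of \cite{MP2}. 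Because the Wronskian identity holds identically in the potential, the structural form of the expansion is the same for $\phi$ and $\phi_0$, so the comparison reduces to edgewise error estimates controlled by the asymptotics above, each contributing only $O(1)$. Summing over edges then yields the corollary.
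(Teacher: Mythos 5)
Your route is genuinely different from the paper's. The paper proves the corollary for equal, midpoint-symmetric potentials by substituting the scalar asymptotics \eqref{2.8} into the closed-form factorization \eqref{2.5}: since $\psi$ is a fixed polynomial evaluated at the bounded quantity $c(\lambda,l)=\cos\sqrt{\lambda}l+O(\lambda^{-1/2})$, the estimate is immediate, and the general (non-symmetric, non-equal) case is disposed of by citing Theorem 5.4 of \cite{CaP2}. You instead attack the $2g\times 2g$ determinant directly, which, if completed, would give a self-contained proof of the general case. Your first two steps are correct but, as you note, yield only $\phi-\phi_0=O(k^{p-r-1})$. The proposed sharpening is where the gap lies: the Wronskian identity $s_jc_j'-s_j'c_j=1$ does not by itself produce the needed cancellation, and the sentence ``the comparison reduces to edgewise error estimates \dots each contributing only $O(1)$'' is precisely the assertion that has to be proved, not a reason for it. Moreover, ``expand the determinant edge by edge as in \cite{MP2}'' is unavailable here, since that reduction to \eqref{2.5} exploits the identity $s_j'(\lambda,l)=c_j(\lambda,l)$, which holds only for potentials symmetric about the midpoint --- exactly the hypothesis you must avoid.

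The gap is fixable, and the fix shows why normalizing rows alone is the wrong move: you must rescale columns as well. The unknown $A_j$ enters the continuity rows through $s_j(\lambda,l)$, the Kirchhoff row at the terminal vertex through $s_j'(\lambda,l)$, and the Kirchhoff row at the initial vertex through the constant $1$ (from $y_j'(0)=A_j$). Substituting $A_j=k\tilde A_j$ (multiplying every $A_j$-column by $k$) and then dividing every Kirchhoff row by $k$ turns each entry of the characteristic matrix into $\sin kl$, $\cos kl$, $-\sin kl$, or a constant, up to an $O(k^{-1})$ error uniform in the potentials. The rescaled matrices for $(q_j)$ and for $q_j\equiv 0$ then have bounded entries and differ entrywise by $O(k^{-1})$, so multilinearity gives an $O(k^{-1})$ difference of determinants; undoing the scalings multiplies by $k^{(p-r)-g}=k^{-(g-p+r)}\le 1$, whence $\phi-\phi_0=O\bigl(k^{-(g-p+r)-1}\bigr)=o(1)$, which is stronger than \eqref{2.7}. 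You should also track normalization: you identify $\phi$ and $\phi_0$ with determinants of characteristic matrices, whereas the paper's $\phi_0$ is defined by the formula \eqref{2.6}; these agree only up to a constant factor that must be accounted for if the conclusion is to read exactly as stated.
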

\begin{proof} We use the following asymptotics \cite{Ma}:
\begin{equation}
\label{2.8}
s_j(\lambda,l)\mathop{=}\limits^{\lambda\rightarrow+\infty}\frac{\sin\sqrt{\lambda}l}{\sqrt{\lambda}}+O\left(\frac{1}{\lambda}\right), \ \ \ 
c_j(\lambda,l)\mathop{=}\limits^{\lambda\rightarrow+\infty}\cos\sqrt{\lambda}l+O\left(\frac{1}{\sqrt{\lambda}}\right). 
\end{equation}
\[
s''_j(\lambda,l)\mathop{=}\limits^{\lambda\rightarrow+\infty}\cos\sqrt{\lambda}l+O\left(\frac{1}{\sqrt{\lambda}}\right), \ \ \ 
c'_j(\lambda,l)\mathop{=}\limits^{\lambda\rightarrow+\infty}-\sqrt{\lambda}\sin\sqrt{\lambda}l+O\left(1\right).
\]
Suppose first that all the potentials on the edges are the same and symmetric with respect to  the midpoint of an edge. Then using (\ref{2.5}) and (\ref{2.8}) we obtain (\ref{2.7}). 
\end{proof}
Now let the potentials be diffirent and not symmetric but real $L_2(0,l)$-functions. Then we apply Theorem 5.4  from  \cite{CaP2} and obtain (\ref{2.7}). This means that if two graphs are cospectral then they must have not only the same $\phi(z)$ but also the same $\phi_0(z)$. Thus, we need to investigate $\phi_0(z)$.

\section{Cospectrality}
The spectrum of problem (\ref{2.1})-- (\ref{2.4}) consists of normal (isolated Fredholm) eigenvalues of finite multiplicity. The corresponding operator is selfadjoint,  therefore these eigenvalues are real. For the main term of the asymptotics we have


\begin{equation}
\label{55}
\mathop{\lim}\limits_{k \to \infty}\frac{\lambda_k}{k^2}=\frac{\pi^2}{g^2l^2}
\end{equation}
(see \cite{Meh}, \cite{Bel85}, \cite{Ni} or \cite{vB} Corollary 1).
In this paper,  by co-spectral we mean  simple connected equilateral (with the same length of the edges) graphs with the same spectrum of problem (\ref{2.1})--(\ref{2.4}).

According to (\ref{55}),   co-spectral graphs must have the same number of edges,  $g$. 

Equations  (\ref{2.6}) and (\ref{2.7}) imply  that co-spectral graphs must have the same value of $g-p+r$. Let us explain it. The following theorem was proved in \cite{BMP}.

\begin{theorem}   
 Let $T$ be an equilateral tree. The eigenvalues of problem (\ref{2.1})--(\ref{2.4})  can be presented as the union of subsequences $\{\lambda_k\}_{k=1}^{\infty}=\mathop{\bigcup }\limits_{i=1}^{2p-r-1}\{\lambda_k^{(i)}\}_{k=1}^{\infty}$
with the following asymptotics: 
\begin{equation}
\label{3.1}
\sqrt{\lambda_k^{(i)}}\mathop{=}\limits_{k\to\infty}\frac{2\pi (k-1)}{l}\pm\frac{1}{l}\arccos \alpha_{i}+O\left(\frac{1}{k}\right) \ \ {\rm for}  \ \  i=1,2,..., p-p_{pen}, \ \ k=1,2,...
\end{equation}
\begin{equation}
\label{3.3}
\sqrt{\lambda_k^{(i)}}\mathop{=}\limits_{k\to\infty}\frac{\pi k}{l}+O\left(\frac{1}{k}\right) \ \  {\rm for} \ \  i=p-p_{pen}+1, ..., p-1, \ \ k=1,2,...
\end{equation}
Here $\alpha_1, \alpha_2, ..., \alpha_{p-p_{pen}}$ are the zeros of  the polynomial $\psi{}(z)$, and $p_{pen}=r$ is the number of pendant vertices.
\end{theorem}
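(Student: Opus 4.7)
The plan is to specialize the general identity (\ref{2.6}) to the tree case and extract the asymptotic distribution of zeros via a Rouch\'e-type argument, using the estimate (\ref{2.7}) to transfer the result from $\phi_0$ to $\phi$.

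Since $T$ is a tree on $p$ vertices we have $g = p - 1$, so $g - p + r = r - 1$, and (\ref{2.6}) reduces to
$$\phi_0(\lambda) = \left(\frac{\sin\sqrt{\lambda}\,l}{\sqrt{\lambda}}\right)^{\!r-1}\psi(\cos\sqrt{\lambda}\,l).$$
The zeros of $\phi_0$ therefore split into two families. The sine factor, which enters with multiplicity $r - 1$, contributes zeros at $\sqrt{\lambda}\,l = \pi k$. The second factor vanishes when $\cos\sqrt{\lambda}\,l$ equals one of the roots $\alpha_1, \ldots, \alpha_{p-p_{pen}}$ of $\psi$. Since $\psi(z) = \det(-z\hat{D} + \hat{A})$ with $\hat{D}$ positive-diagonal and $\hat{A}$ symmetric of size $p - r$, it is a polynomial of degree $p - p_{pen}$ with real roots $\alpha_i$, and a Cauchy-interlacing argument applied to $\hat{D}^{-1/2}\hat{A}\hat{D}^{-1/2}$ (a principal submatrix of the normalized adjacency matrix of $T$) yields $|\alpha_i| \leq 1$. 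Each such root therefore produces two branches $\sqrt{\lambda}\,l = 2\pi(k-1) \pm \arccos\alpha_i$. A direct count gives $2(p - p_{pen})$ subsequences from the $\psi$ factor and $r - 1 = p_{pen} - 1$ from the sine factor, for a total of $2p - r - 1$, matching the claim.

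To lift this zero structure of $\phi_0$ to the actual spectrum, I would combine (\ref{2.7}) with a Rouch\'e-type argument on small discs around each predicted zero. For the simple zeros produced by $\psi$, a disc of radius $O(1/k)$ in the $\sqrt{\lambda}$-variable suffices; a direct computation of $\phi_0'$ shows that $\phi_0$ dominates the perturbation on the boundary of such a disc, and the subsequences (\ref{3.1}) and (\ref{3.3}) are then read off from the centers. The sine zeros of multiplicity $r - 1$ require a separate argument: one must verify that the $r - 1$ perturbed roots split into $r - 1$ distinct subsequences (rather than bunching up), which can be done via a Hadamard-factorization estimate applied to $\phi/\phi_0$.

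The main obstacle is precisely this perturbation step. A crude comparison fails because $\phi_0$ decays like $\lambda^{-(r-1)/2}$ while the error in (\ref{2.7}) is only $O(1)$, so one must first renormalize by a power of $\lambda$ (equivalently, work with $\mu^{r-1}\phi(\mu^2)$ in the variable $\mu = \sqrt{\lambda}$) before Rouch\'e becomes applicable. A further technicality arises when $|\alpha_i| = 1$ for some $i$, since then a root of $\psi(\cos\sqrt{\lambda}\,l)$ coincides with a sine zero and the two families merge; this accidental case must be handled separately in the perturbation analysis to preserve the claimed enumeration of $2p - r - 1$ subsequences.
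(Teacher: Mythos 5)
First, a point of reference: the paper does not prove this theorem at all --- it is quoted verbatim from \cite{BMP} with the remark ``The following theorem was proved in \cite{BMP}.'' So there is no internal proof to compare against, and your proposal has to be judged as a standalone argument. Its algebraic half is sound and essentially what any proof must contain: for a tree $g=p-1$ gives $g-p+r=r-1$, the factorization \eqref{2.6} splits the zeros into the sine family and the $\psi(\cos\sqrt{\lambda}\,l)$ family, $\deg\psi=p-r$, the roots $\alpha_i$ are the eigenvalues of $\hat D^{-1/2}\hat A\hat D^{-1/2}$ and lie in $[-1,1]$ by interlacing with the normalized adjacency matrix, and the count $2(p-r)+(r-1)=2p-r-1$ comes out right. (One improvement: for a tree the degenerate case $|\alpha_i|=1$ that you propose to ``handle separately'' cannot occur, because $\hat D-\hat A$ is a grounded Laplacian of a connected graph with at least one vertex deleted, hence positive definite, and $\hat D+\hat A$ is similar to it by a diagonal $\pm 1$ sign flip since a tree is bipartite; so $\psi(\pm1)\neq 0$ and $|\alpha_i|<1$ strictly.)

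The genuine gap is in the analytic half. A Rouch\'e argument on small discs around the predicted points, even after the renormalization $\mu^{r-1}\phi(\mu^2)$ that you correctly identify as necessary, only establishes that \emph{near each predicted point there are zeros of the right multiplicity}; it does not establish that the union of the listed subsequences \emph{exhausts} the spectrum, which is what the theorem asserts ($\{\lambda_k\}=\bigcup_i\{\lambda_k^{(i)}\}$). Closing this requires a global count --- Rouch\'e on expanding contours in the $\mu=\sqrt{\lambda}$ plane --- and for that the purely real-axis estimate \eqref{2.7}, with its error term $O(1)$ that already swamps the decaying main term, is useless; one needs the uniform complex-plane estimates behind it (Marchenko's bounds with the weight $e^{|\mathrm{Im}\,\mu| l}$, i.e.\ Theorem 5.4 of \cite{CaP2}), in which the perturbation is smaller than the dominant trigonometric polynomial by a factor $1/|\mu|$ on the contours. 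You gesture at the difficulty but do not supply this input, and the ``Hadamard-factorization estimate applied to $\phi/\phi_0$'' is not a workable substitute, since $\phi/\phi_0$ is not entire. Finally, the worry that the $r-1$ perturbed roots near $\pi k/l$ might ``bunch up'' is a non-issue: \eqref{3.3} does not require the subsequences to be distinct, only that $r-1$ eigenvalues (with multiplicity) lie within $O(1/k)$ of $\pi k/l$, which the local Rouch\'e step already delivers.
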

Here the subsequences (\ref{3.3}) correspond to the factor $\left(\frac{\sin\sqrt{\lambda} l}{\sqrt{\lambda}}\right)^{q-p+r}$ in  (\ref{2.6})       while the subsequences (\ref{3.1}) correspond to the factor $\psi(\cos\sqrt{\lambda}l)$.
Of course, the polynomial $\psi(z)$ may contain a factor $(1-z^2)$ and, consequently, $\psi(\cos \sqrt{\lambda} l)$  may contain $(1-\cos^2\sqrt{\lambda})=\sin^2\sqrt{\lambda} l$. However, this factor gives a subsequence
\[
\sqrt{{\lambda}_k^{(i)}}\mathop{=}\limits_{k\to\infty}\frac{\pi k}{l}+O\left(\frac{1}{k}\right) \ \  {\rm for} \ \ k=0,1,...
\]
which starts with $k=0$ and therefore differs from (\ref{3.3}). Similar arguments lead to the assertion that to check cospectrality it is sufficient to compare only graphs with the same  $g$ and the same $\Delta\equiv g-p+r$.  
\begin{theorem} There are no co-spectral graphs among simple connected equilateral graphs of seven or less edges in case of the Dirichlet conditions at the pendant vertices and standard conditions at the interior vertices. 
\end{theorem}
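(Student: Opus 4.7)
The plan is to reduce the theorem, via the machinery already set up, to a finite comparison of the polynomials $\psi(z)=\det(-z\hat{D}+\hat{A})$ attached to each graph, and then to carry that comparison out exhaustively for $g\le 7$. By the Corollary, $\phi(\lambda)=\phi_0(\lambda)+O(1)$ as $\lambda\to+\infty$, so whenever two equilateral graphs $G_1,G_2$ are cospectral one has $\phi_0^{(1)}\equiv\phi_0^{(2)}$. The asymptotic law (\ref{55}) fixes $g$, and by separating the zeros of $\phi_0$ into the subsequence (\ref{3.3}) produced by $(\sin\sqrt{\lambda}l/\sqrt{\lambda})^{\Delta}$ and the subsequences (\ref{3.1}) produced by $\psi(\cos\sqrt{\lambda}l)$ --- the argument sketched after Theorem 3, valid for general graphs by the structure of (\ref{2.6}) --- one sees that $\Delta=g-p+r$ must also coincide. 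Cancelling the sine factor in (\ref{2.6}) reduces cospectrality to $\psi_1(\cos\sqrt{\lambda}l)\equiv\psi_2(\cos\sqrt{\lambda}l)$, and since $\cos\sqrt{\lambda}l$ takes infinitely many real values as $\lambda$ varies (growing unboundedly as $\cosh\sqrt{|\lambda|}l$ for $\lambda<0$), this forces $\psi_1\equiv\psi_2$ as polynomials in $z$.

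Granted this reduction, I would enumerate, for $g=1,2,\ldots,7$, all connected simple graphs up to isomorphism, conveniently organised by the cyclomatic number $\mu=g-p+1$ (so that $\Delta=\mu-1+r$). For each graph I would record the tuple $(g,p,r,\Delta)$ and group together the graphs sharing the same $(g,\Delta)$, since graphs in different classes cannot be cospectral and need no comparison. For each graph I would then write down the reduced matrix $-z\hat{D}+\hat{A}$, of size $p-r\le 7$, and expand $\psi(z)$ explicitly as a polynomial in $z$.

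The final step is to compare the coefficient sequences within each $(g,\Delta)$-class. Many pairs are already separated by gross invariants visible in $\psi$: the leading coefficient $(-1)^{p-r}\prod_{v\text{ interior}}d(v)$ encodes the multiset of interior vertex degrees, the next coefficient records (up to sign) twice the number of edges joining two interior vertices, and so on. The genuine content of the theorem is the ruling out of the residual handful of pairs that agree on these gross invariants; for those one has to compute the full polynomial and exhibit at least one coefficient where the two $\psi$'s differ.

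The main obstacle is combinatorial rather than conceptual: the enumeration of graphs must be exhaustive, the $(g,\Delta)$-partition must be correct, and the determinant expansions --- particularly for graphs of cyclomatic number $\mu\ge 2$ with six or seven edges, where the reduced matrix has few zero entries --- must be performed without arithmetic slips. No general structural shortcut is to be expected, since the bound $g\le 7$ is sharp: cospectral pairs already exist for $g=8$ (see \cite{Pist}, Fig.~1), so the argument must ultimately consist of the explicit table of polynomials $\psi$ just described.
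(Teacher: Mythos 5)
Your proposal follows essentially the same route as the paper: reduce cospectrality (via the asymptotics \eqref{55}, the corollary, and the subsequence structure of \eqref{2.6}) to agreement of $g$, of $\Delta=g-p+r$, and of the polynomial $\psi(z)=\det(-z\hat D+\hat A)$, and then exhaustively enumerate all simple connected graphs with $g\le 7$, grouped by $(g,\Delta)$, and check that no two graphs in the same group have $\psi$'s with the same zero set. The paper's proof is exactly this finite tabulation of the polynomials $\psi^i_{g,\Delta}$, so your plan is correct and matches it in substance.
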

\begin{proof} All simple connected graphs of 1, 2, 3 and 4 edges are presented at Fig.  1. We denote the graphs by $G^i_{g,g-p+r}$ where the upper index enumerate the graphs with the same $g$ and same $g-p+r$ given as the lower indices. 
\begin{figure}[h]
  \begin{center}
   \includegraphics[scale=0.5 ] {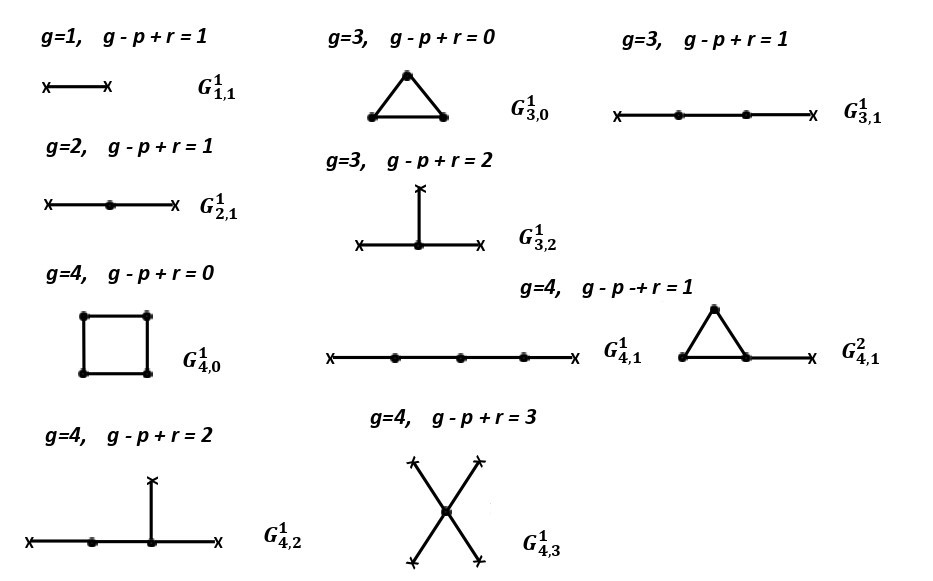}
  \end{center}
\caption{  Simple connected graphs of 1,2,3 and 4 edges}
\end{figure}
Among these graphs there are only two  $G_{4,1}^1$ and $G_{4,1}^2$  with the same $g$ and $g-p+r$. However, the corresponding characteristic polynomials are 
\[
\psi_{4,1}^1(z)=-8z^3+4z,  \ \ \ \psi_{4,1}^2(z)=-12z^3+7z+2
\]
The sets of zeros of these polynomials are different. Thus there are no co-spectral graphs among the graphs of Fig. 1. 

Now let us consider the graphs of 5 edges. All simple connected graphs of 5 edges are presented at Fig 2.

\begin{figure}[h]
  \begin{center}
  \includegraphics[scale=0.5 ] {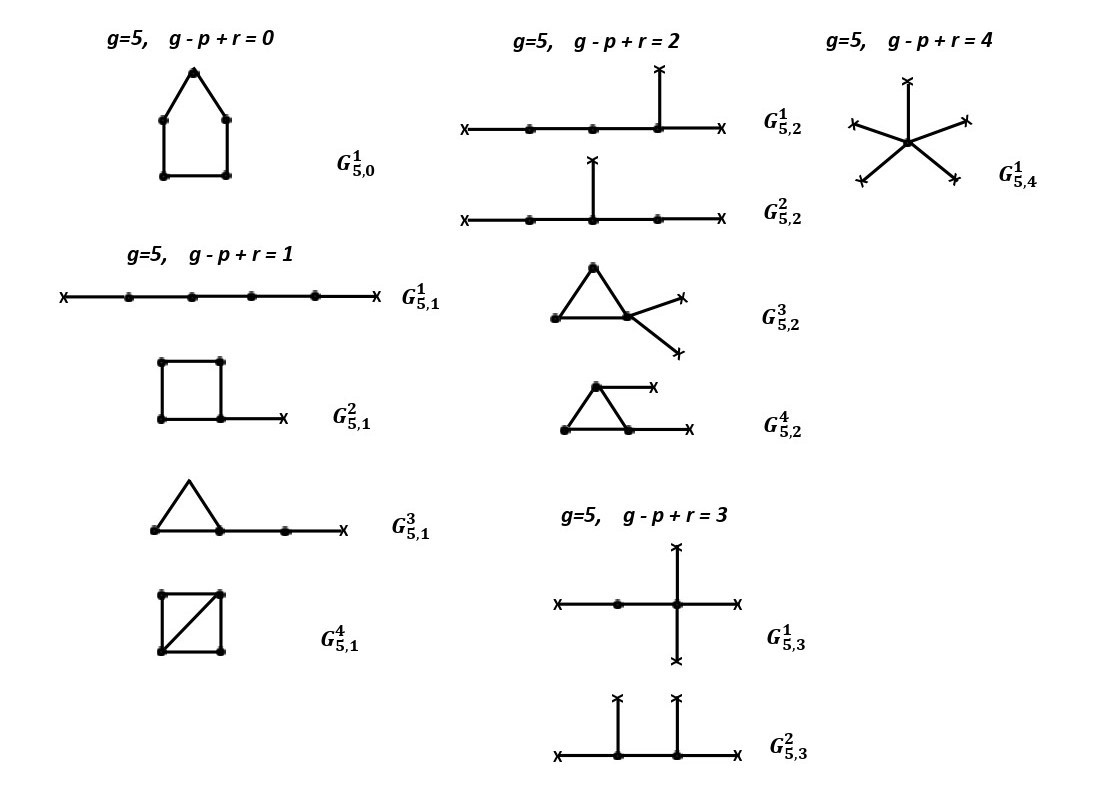}
  \end{center}
\caption{Simple connected graphs with $g=5$}
\end{figure}


Among these graphs there are four graphs  $G_{5,1}^1$, $G_{5,1}^2$, $G_{5,1}^3$, and $G_{5,1}^4$  with the same $g=5$ and $g-p+r=1$. However, the corresponding characteristic polynomials are 
\[\psi_{5,1}^1(z)=16z^4-12z^2+1,  \ \ \ \psi_{5,1}^2(z)=24z^4-20z^2,
\]
\[\psi_{5,1}^3(z)=24z^4-18z^2-4z+1,  \ \ \ \psi_{5,1}^4(z)=24z^4-24z^2-8z,
\]
It is clear that these polynomials have different sets of zeros.

There are  four graphs  $G_{5,2}^1$, $G_{5,2}^2$, $G_{5,2}^3$,  and $G_{5,2}^4$  with the same $g=5$ and $g-p+r=2$. Their polynomials are
\[\psi_{5,2}^1(z)=-12z^3+5z,  \ \ \ \psi_{5,2}^2(z)=-12z^3+4z,
\]
\[\psi_{5,2}^3(z)=-16z^3+8z+2,  \ \ \ \psi_{5,2}^4(z)=-18z^3+8z+2,
\]
The sets of zeros of these polynomials are different.

There are two graphs $G_{5,3}^1$ and $G_{5,3}^2$ shown in Fig. 2 with $g=5$ and $g-p+r=3$. Their characteristic polynomials 
\[\psi_{5,3}^1(z)=8z^2-1,  \ \ \ \psi_{5,3}^2(z)=9z^2-1
\]
have different sets of zeros.

There are 29 simple connected graphs with 6 edges. They are shown in Fig. 3 and Fig. 4. 
 
\begin{figure}[h]
\begin{center}
\includegraphics[scale=0.5 ] {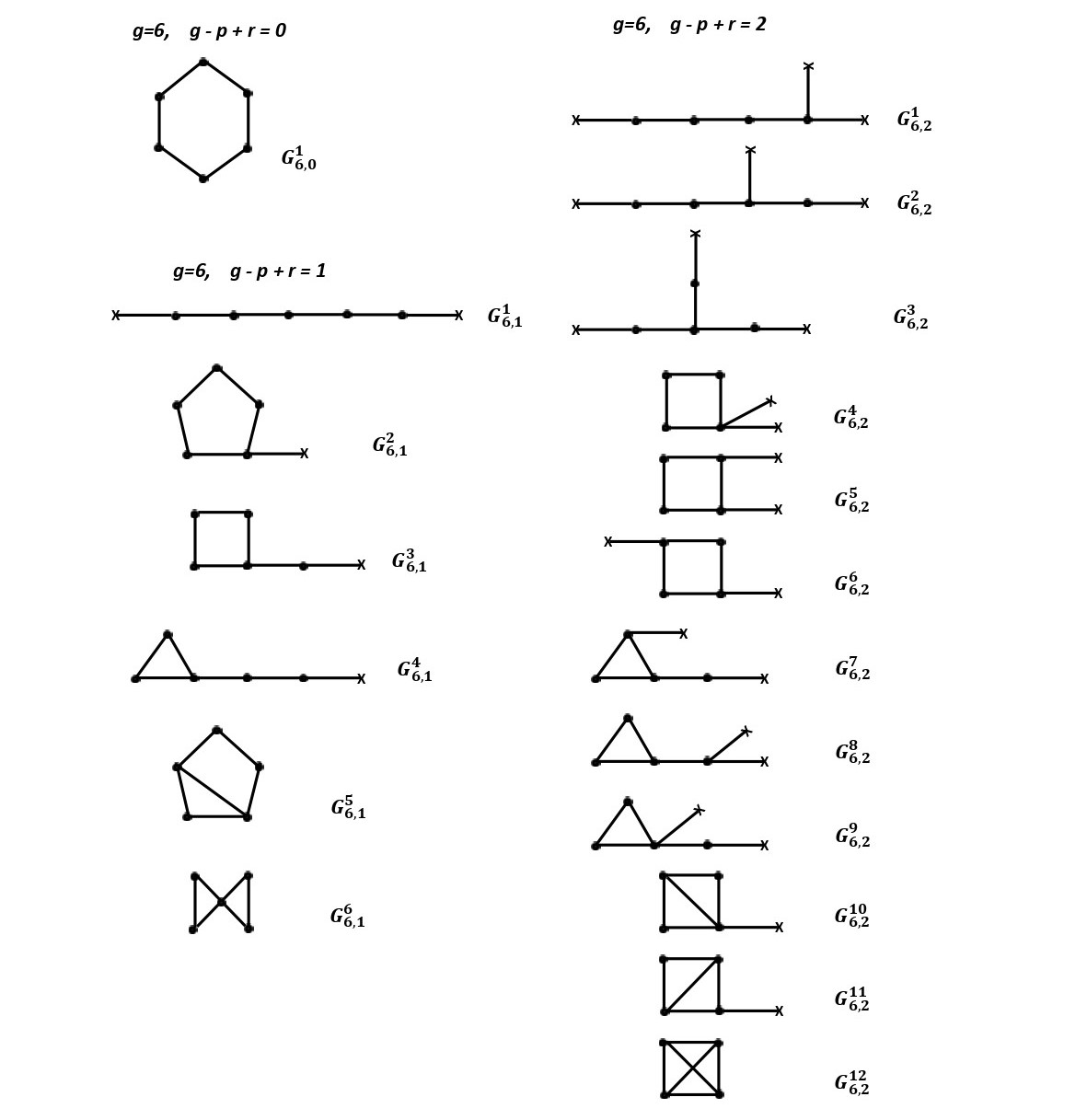}
 \end{center}
\caption{Simple connected graphs of 6 edges}
\end{figure}


\begin{figure}[h]
\begin{center}
\includegraphics[scale=0.5 ] {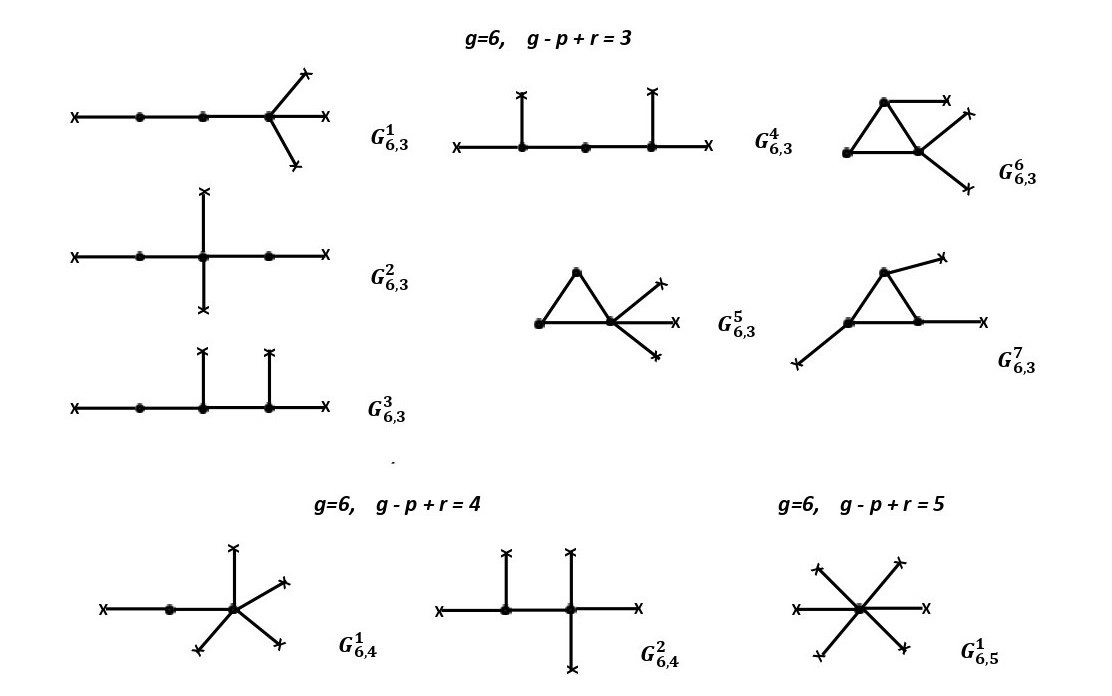}
\end{center}
\caption{Simple connected graphs of 6 edges}
\end{figure}


We see that there are six simple connected graphs $G_{6,1}^1$, $G_{6,1}^2$, $G_{6,1}^3$, $G_{6,1}^4$. $G_{6,1}^5$, $G_{6,1}^6$ 
with $g=6$ and $g-p+r=1$.  Their characteristic polynomials are

\[
\psi_{6,1}^1(z)=-32z^5+32z^3-6z, \ \ \ \psi_{6,1}^2(z)=-48z^5+52z^2-11z+2,
\]
\[
\psi_{6,1}^3(z)=-48z^5+48z^3-4z, \ \ \ \psi_{6,1}^4(z)=-48z^5+48z^3+8z^2-9z-2,
\]
\[
\psi_{6,1}^5(z)=-48z^5+60z^3+8z^2-9z, \ \ \ 
\psi_{6,1}^6(z)=-64z^5+64z^3+16z^2-12z-4.
\]
We see that there are no polynomials with the same set of zeros among these.

There  are 12 graphs $G_{6,2}^1$, $G_{6,2}^2$, $G_{6,2}^3$, $G_{6,2}^4$. $G_{6,2}^5$, $G_{6,2}^6$, $G_{6,2}^7$, $G_{6,2}^8$, $G_{6,2}^9$, $G_{6,2}^{10}$. $G_{6,2}^{11}$, $G_{6,2}^{12}$ with $g=6$ and $g-p+r=2$. The corresponding polynomials are 
\[
\psi_{6,2}^1(z)=24z^4-16z^2+1, \ \ \ 
\psi_{6,2}^2(z)=24z^5-14z^2+1.
\]
\[
\psi_{6,2}^3(z)=24z^4-12z^2, \ \ \ 
\psi_{6,2}^4(z)=32z^4-24z^2,
\]
\[
\psi_{6,2}^5(z)=36z^4-25z^2, \ \ \
\psi_{6,2}^6(z)=36z^4-24z^2,
\]
\[
\psi_{6,2}^7(z)=36z^4-22z^2-4z+1, \ \ \
\psi_{6,2}^8(z)=36z^5-25z^2-6z+1,
\]
\[
\psi_{6,2}^9(z)=32z^4-20z^2-4z+1, \ \ \
\psi_{6,2}^{10}(z)=48z^--32z^2-8z.
\]
\[
\psi_{6,2}^{11}(z)=54z^4-36z^2-10z, \ \ \
\psi_{6,2}^{12}(z)=81z^4-54z^2-24z-3.
\]
We see that the sets of zeros of these polynomials are different.

There are seven graphs with $g=6$ and $g-p+r=3$ (see Fig. 4.). The corresponding polynomials are
\[
\psi_{6,3}^1(z)=-16z^3+6z, \ \ \
\psi_{6,3}^2(z)=-16z^3+4z,
\]
\[
\psi_{6,3}^3(z)=-18z^3+5z, \ \ \
\psi_{6,3}^4(z)=-18z^3+6z,
\]
\[
\psi_{6,3}^6(z)=-20z^3+9z+2, \ \ \
\psi_{6,3}^7(z)=-24z^3+9z+2, 
\]
\[
\psi_{6,3}^8(z)=-27z^3+9z+2.
\]
They have different sets of zeros.

Two graphs of $g=6$ and $g-p+r=4$ are shown in Fig. 4. Their polynomials $\psi_{6.4}^1$ and $\psi_{6,4}^2$ have different sets of zeros: 
\[
\psi_{6,4}^1(z)=10z^2-1, \ \ \  
\psi_{6,4}^2(z)=12z^2-1.
\]

Thus, we conclude that there are no co-spectral graphs of 6 or less edges.

Now we consider graphs with $g= 7$. We look for co-spectral  among simple connected graphs.   The graph $C_7$ ( the cycle of seven vertices) has $g=7$ and $g-p+r=0$. There are no other simple connected graphs with such parameters and consequently $C_7$ has no co-spectral  partner. 

Now let $q=7$ and $g-p+r=1$. 
There are 8 simple connected graphs with $g=7$ and $g-p+r=1$. These graphs are given at Fig. 5.

\begin{figure}[h]
\begin{center}
   \includegraphics[scale=0.5 ] {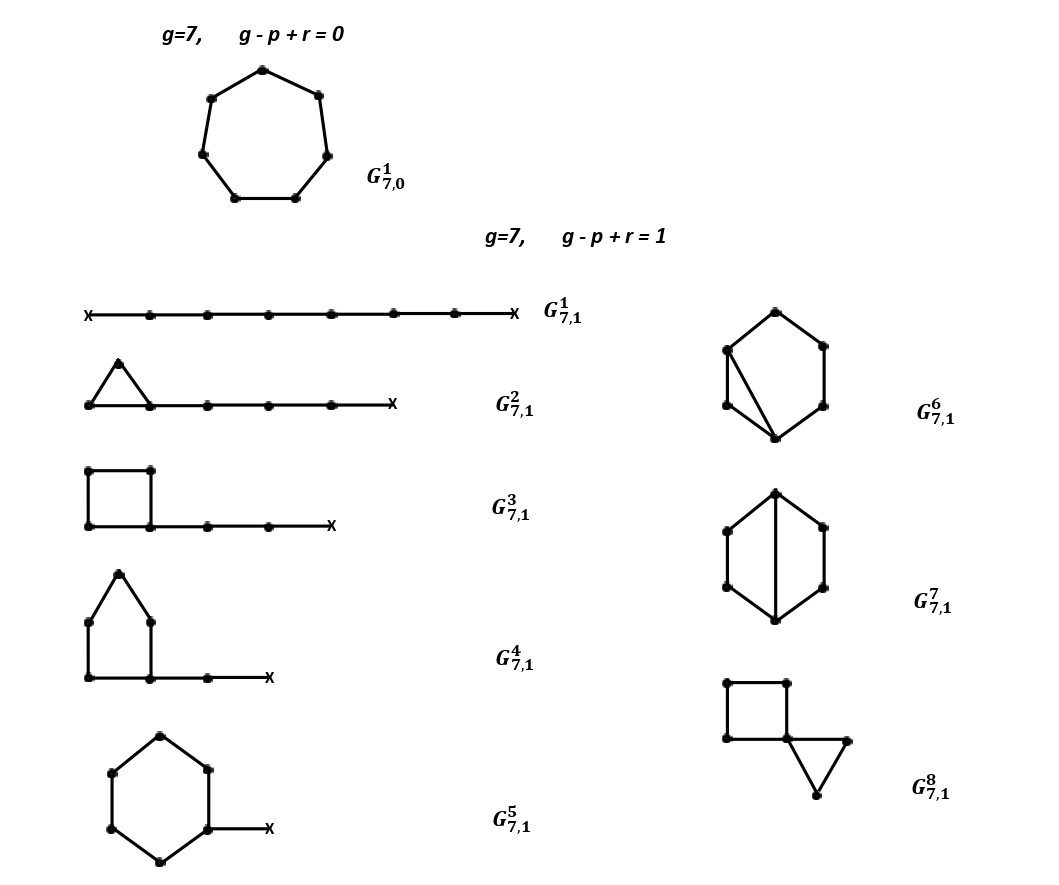}
 \end{center}
\caption{Simple connected graphs with $g=7$ and $g-p+r=1$.}
\end{figure}


The corresponding characteristic polynomials are: 
\[
\psi_{7,1}^1(z)=64z^6-80z^4+24z^2-1,
\ \ \
\psi(z)_{7,1}^2=96z^6-120z^4-16z^3+36z^2+8z-1,
\]
\[
\psi(z)_{7,1}^3=96z^6-120z^4+28z^2,
\ \ \
\psi(z)_{7,1}^4=96z^6-112z^4+30z^2-2,
\]
\[
\psi(z)_{7,1}^5=96z^6-120z^4+34z^2-4z-1,
\ \ \
\psi(z)_{7,1}^6=144z^6-168z^4+48z^2-4,
\]
\[
\psi(z)_{7,1}^7=144z^6-168z^4+49z^2-4,
\ \ \
\psi(z)_{7,1}^8=128z^6-160z^4-16z^3+40z^2+8z.
\]
We see that there are no polynomials with the same set of zeros among them.

Let $q=7$ and $g-p+r=2$. 
There are 27 simple connected graphs  corresponding to $g=7$ and  $g-p+r=2$. These graphs  are shown at Fig. 6. The corresponding characteristic polynomials are

\begin{figure}[h]
\begin{center}
\includegraphics[scale=0.5] {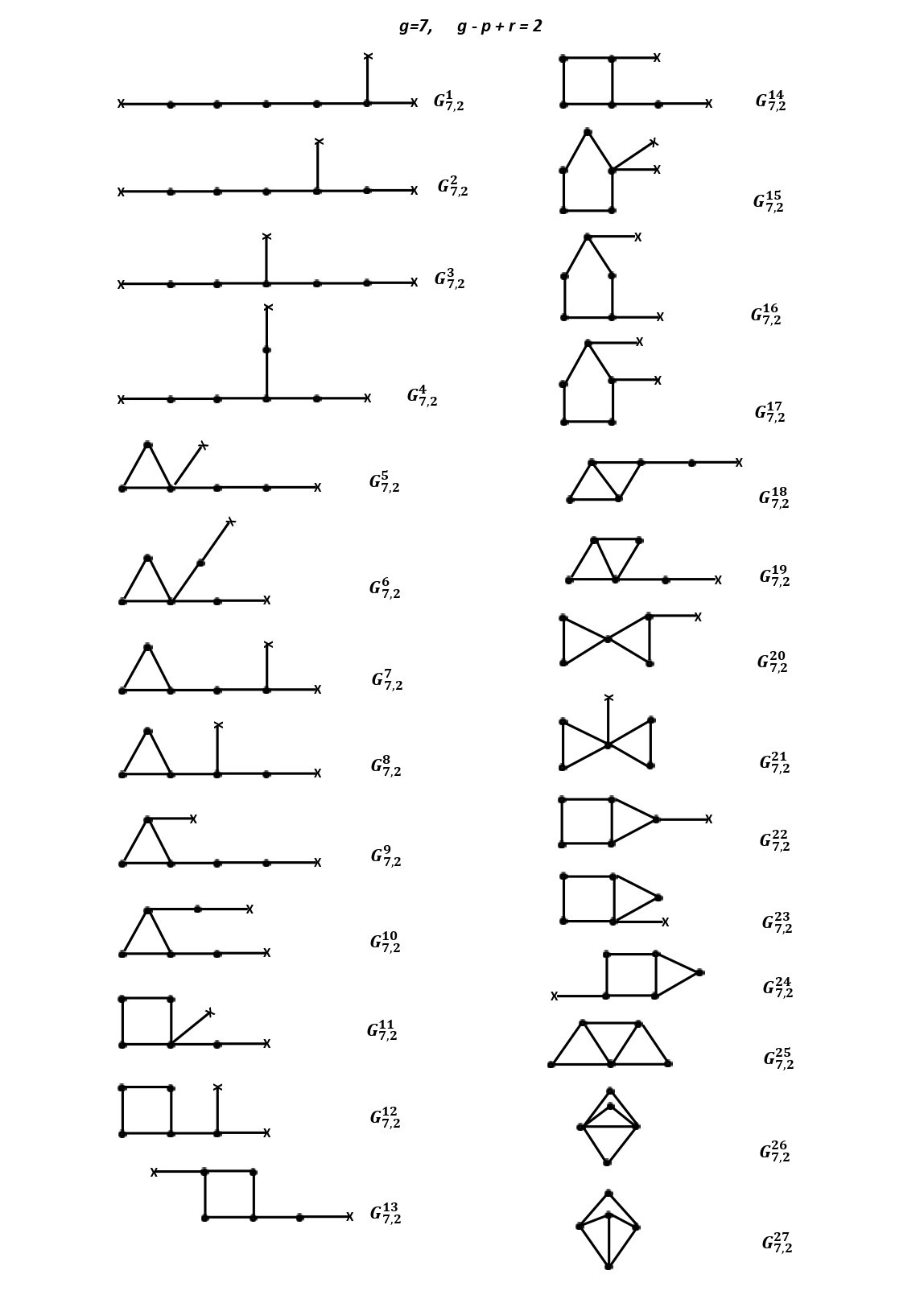}
\end{center}
\caption{Simple connected graphs with $g=7$ and $g-p+r=2$}
\end{figure}

\[
\psi_{7,2}^1(z)=-48z^5+44z^3-7z,
\ \ \
\psi_{7,2}^2(z)=-48z^5+40z^3-6z,
\]
\[
\psi_{7,2}^3(z)=-48z^5+40z^3-7z, \ \ \
\psi_{7,2}^4(z)=-48z^5+36z^3-4z.
\]
\[
\psi(z)_{7,2}^5=-64z^5+56z^3+8z^2-10z-2,
\ \ \ 
\psi(z)_{7,2}^6=-64z^5+48z^3+8z^2-4z,
\]
\[
\psi(z)_{7,2}^7=-72z^5+66z^3+12z^2-10z-2,
\ \ \
\psi(z)_{7,2}^8=-72z^5+62z^3+12z^2-9z-2,
\]
\[
\psi(z)_{7,2}^9=-72z^5+62z^3+8z^2-10z-2,
\ \ \
\psi(z)_{7,2}^{10}=-72z^5+56z^3+8z^2-6z,
\]
\[
\psi(z)_{7,2}^{11}=-64z^5+56z^3-4z,
\ \ \
\psi(z)_{7,2}^{12}=-72z^5+68z^3-4z,
\]
\[
\psi(z)_{7,2}^{13}=-72z^5+60z^3-4z,
\ \ \
\psi(z)_{7,2}^{14}=-72z^5+60z^3-5z,
\]
\[
\psi(z)_{7,2}^{15}=-64z^5+64z^3-12z+2,
\ \ \
\psi(z)_{7,2}^{16}=-72z^5+66z^3-12z+2,
\]
\[
\psi(z)_{7,2}^{17}=-72z^5+68z^3-12z+2,
\ \ \
\psi(z)_{7,2}^{18}=-108z^5+90z^3+2z^2-8z-2,
\]
\[
\psi(z)_{7,2}^{19}=-98z^5+76z^3+16z^2-4z,
\ \ \
\psi(z)_{7,2}^{20}=-96z^5+84z^3+20z^2-13z-4,
\]
\[
\psi(z)_{7,2}^{21}=-80z^5+72z^3+16z^2-13z-4,
\ \ \
\psi(z)_{7,2}^{22}=-108z^5+994z^3+8z^2-10z,
\]
\[
\psi(z)_{7,2}^{23}=-96z^5+88z^3+8z^2-11z,
\ \ \
\psi(z)_{7,2}^{24}=-108z^5+96z^3+12z^2-11z,
\]
\[
\psi(z)_{7,2}^{25}=-98z^5+176z^3+16z^2-4z,
\ \ \
\psi(z)_{7,2}^{26}=-128z^5+104z^3+24z^2,
\]
\[
\psi(z)_{7,2}^{27}=-16z2^5+144z^3+24z^2-6z.
\]
We see that there are no plynomials with the same set of zeros among them.


\begin{figure}[h]
 \begin{center}
\includegraphics[scale=0.5] {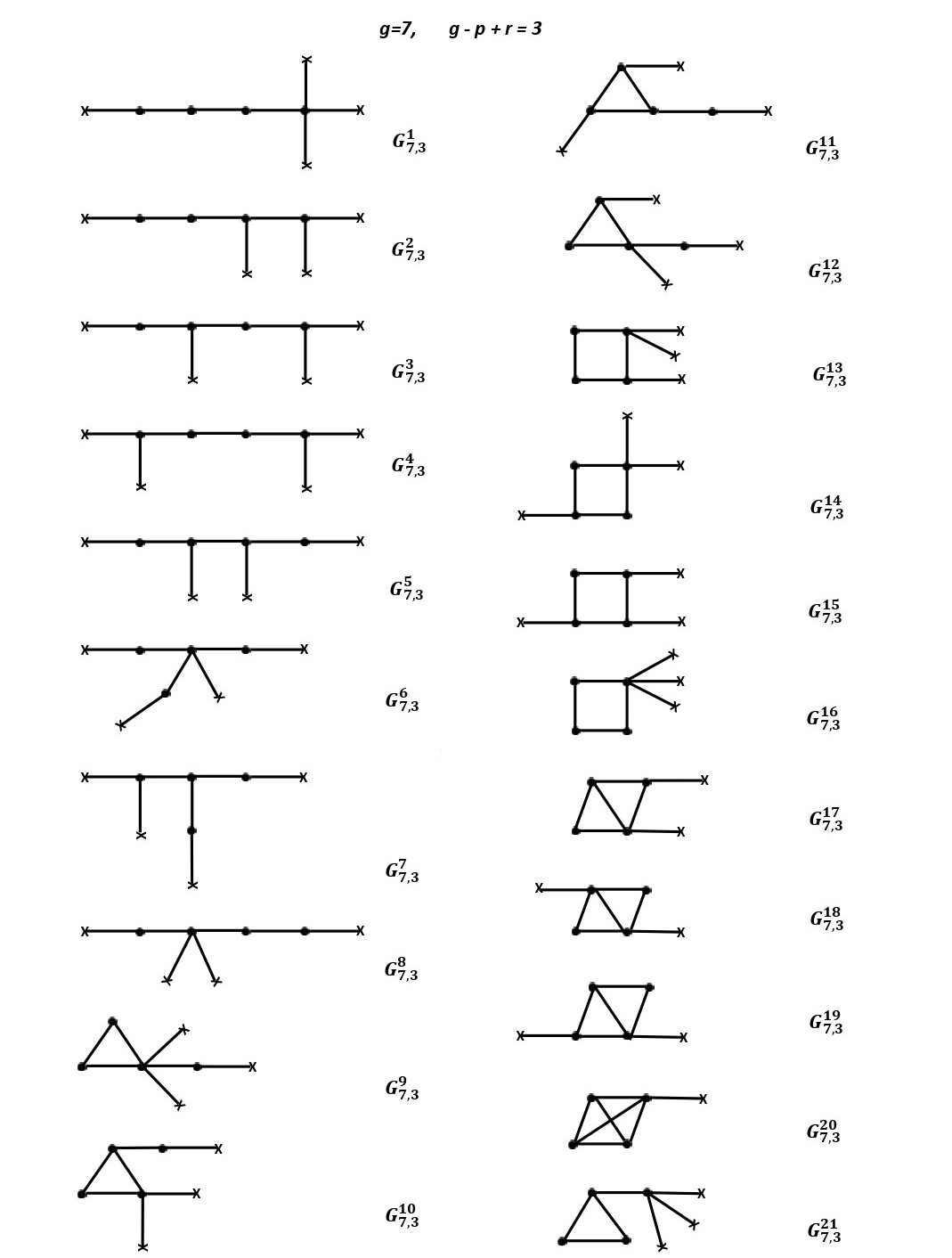}
  \end{center}
\caption{Simple connected graphs with $g=7$ and $g-p+r=3$}
\end{figure}


We have 21 graphs corresponding to $g=7$ and $g-p+r=3$.   These graphs are  
shown at Fig. 7. 

The corresponding characteristic polynomials  are
\[
\psi(z)_{7,3}^{1}=32z^4-20z^2+1,
\ \ \
\phi(z)_{7,3}^{2}=36z^4-19z^2+1,
\]
\[
\psi(z)_{7,3}^{3}=36z^4-18z^2+1,
\ \ \
\psi(z)_{7,3}^{4}=36z^4-21z^2+1,
\]
\[
\psi(z)_{7,3}^{5}=36z^4-16z^2+1,
\ \ \
\psi(z)_{7,3}^{6}=36z^4-12z^2,
\]
\[
\psi(z)_{7,3}^{7}=36z^4-16z^2,
\ \ \
\psi(z)_{7,3}^{8}=36z^4-12z^2+1,
\]
\[
\psi(z)_{7,3}^{9}=
40z^4-22z^2-4z+1,
\ \ \
\psi(z)_{7,3}^{10}=
48z^4-26z^2-4z+1,
\]
\[
\psi(z)_{7,3}^{11}=
54z^4-27z^2-4z+1,
\ \ \
\psi(z)_{7,3}^{12}=
48z^4-24z^2-4z+1,
\]
\[
\psi(z)_{7,3}^{13}=
48z^4-30z^2,
\ \ \
\psi(z)_{7,3}^{14}=
48z^4-28z^2,
\]
\[
\psi(z)_{7,3}^{15}=
54z^4-30z^2,
\ \ \
\psi(z)_{7,3}^{16}=
40z^4-28z^2,
\]
\[
\psi(z)_{7,3}^{17}=
72z^4-41z^2-10z,
\ \ \
\psi(z)_{7,3}^{18}=
64z^4-36z^2-8z,
\]
\[
\psi(z)_{7,3}^{19}=
81z^4-76z^2-12z,
\ \ \
\psi(z)_{7,3}^{20}=
108z^4-63z^2-26z-3,
\]
\[
\psi(z)_{7,3}^{21}=
48z^--32z^2-8z+1.
\]
We see that there are no polynomials with the same set of zeros among them.

We have 10 graphs corresponding to $g-p+r=4$, $g=7$. These graphs are  shown at Fig. 8. 
\begin{figure}[h]
  \begin{center}
   \includegraphics[scale=0.5] {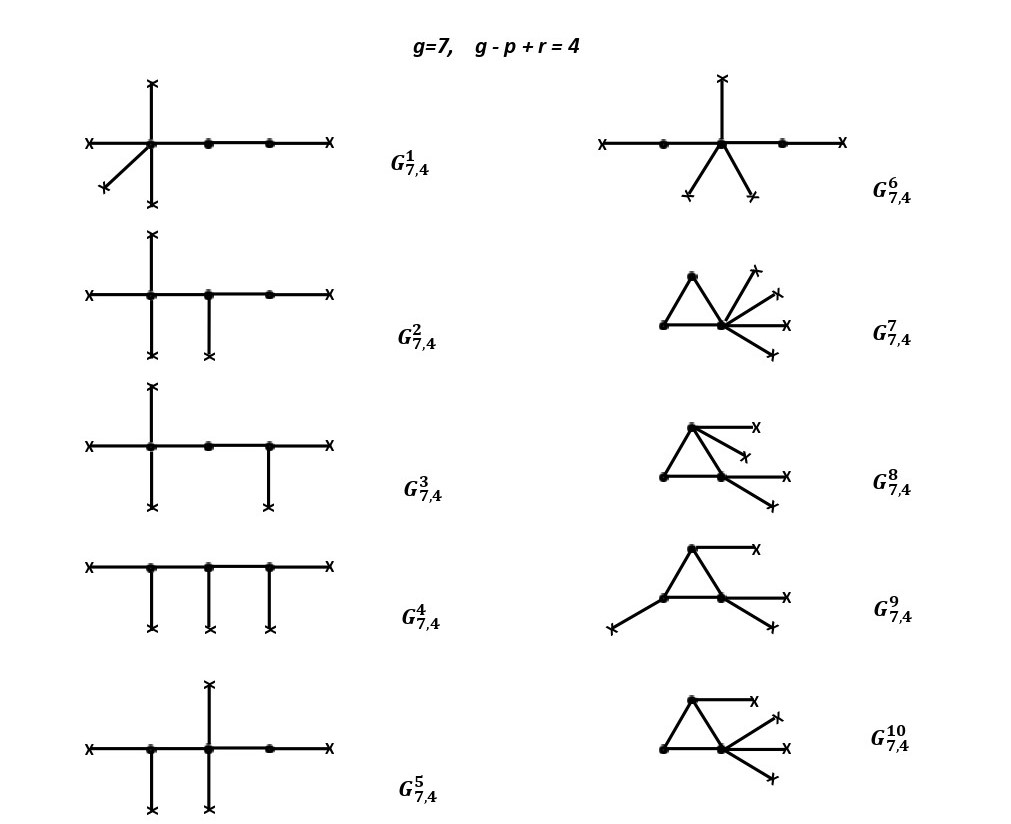}
  \end{center}
\caption{Simple connected graphs with $g=7$ and $g-p+r=4$}
\end{figure}
The corresponding characteristic polynomials  are
\[
\psi(z)_{7,4}^{1}=
-20z^3+7z,
\ \ \
\psi(z)_{7,4}^{2}=
-24z^3+6z,
\]
\[
\psi(z)_{7,4}^{3}=
-24z^3+7z,
\ \ \
\psi(z)_{7,4}^{4}=
-27z^3+6z,
\]
\[
\psi(z)_{7,4}^{5}=
-24z^3+5,
\ \ \
\psi(z)_{7,4}^{6}=
-20z^3+4z.
\]
\[
\psi(z)_{7,4}^{7}=
-24z^3+10z+2, 
\ \ \
\psi(z)_{7,4}^{8}=
-32z^3+10z+2,
\]
\[
\psi(z)_{7,4}^{9}=
-36z^3+10z+2,
\ \ \
\psi(z)_{7,4}^{10}=
-30z^3+10z+2.
\]
We see that there are no polynomials with the same set of zeros among them.

There are 3 graphs with $g=7$ and $g-p+r= 5$. They are shown in Fig. 9. 

\begin{figure}[h]
 \begin{center}
 \includegraphics[scale=0.5] {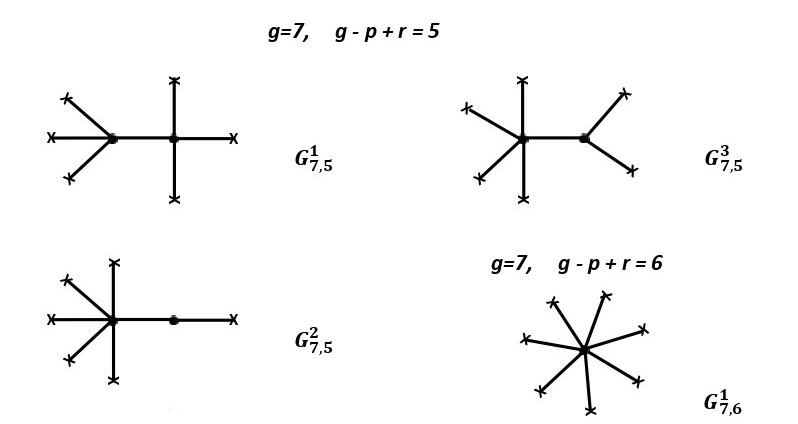}
 \end{center}
\caption{Simple connected graphs with $g=7$ and $g-p+r=5$ and $g-p+r=6$}
\end{figure}


The corresponding polynomials are
\[
\psi_{7,5}^1=16z^2-1,
\ \ \
\psi_{7,5}^2=12z^2-1,
\ \ \
\
psi_{7,5}^3=15z^2-1.
\]

\[
\phi_{7,6}^1= -7z
\]

We see that the sets of zeros are different.
\end{proof}

\noindent\textbf{Acknowledgements}\\
\textit{The authors are grateful to the Ministry of Education and Science of Ukraine for the support in completing the work 'Inverse problems of finding the shape of a graph by spectral data' 
State registration number 0124U000818.}

\textit{The present research was supported by the Academy of Finland (project no. 358155). The third author is grateful to the University of Vaasa for hospitality.}

{\it  The  first and the third authors express their gratitude to NSF US for support IMPRESS-U: Spectral and geometric methods for damped wave equations with applications to fiber lasers.}

\end{document}